\newtheorem{theorem}{Theorem}
\newcommand{\RNum}[1]{\uppercase\expandafter{\romannumeral #1\relax}}
\newtheorem{Lemma}{Lemma}
\newtheorem{MDP}{MDP}
\newtheorem{problem}{Problem}
\newtheorem{corollary}{Corollary}
\definecolor{lime}{HTML}{A6CE39}
\titlespacing{\section}{0pt}{1.2ex plus .0ex minus .0ex}{.3ex plus .0ex}
\titlespacing{\subsection}{0pt}{1.2ex plus .0ex minus .0ex}{.3ex plus .0ex}
\DeclareRobustCommand{\orcidicon}{%
	\begin{tikzpicture}
		\draw[lime, fill=lime] (0,0) 
		circle [radius=0.16] 
		node[white] {{\fontfamily{qag}\selectfont \tiny ID}};    \draw[white, fill=white] (-0.0625,0.095) 
		circle [radius=0.007];    \end{tikzpicture}
	\hspace{-2mm}}
\xdef\csname orcid\x\endcsname{\noexpand\href{https://orcid.org/\csname orcidauthor\x\endcsname}{\noexpand\orcidicon}}
\newcommand*\bigcdot{\mathpalette\bigcdot@{.5}}
\newcommand*\bigcdot@[2]{\mathbin{\vcenter{\hbox{\scalebox{#2}{$\m@th#1\bullet$}}}}}
	\def\@thmnote#1{\textit{[#1]}} 
\begin{document}
		\title{Optimal Sampling and Scheduling for Remote Fusion Estimation of Correlated Wiener Processes}
		\author{
			Aimin Li and
			Elif Uysal, \emph{Fellow, IEEE}\\
		\IEEEauthorrefmark{1}\textit{Communication Networks Research Group (CNG), EE Dept, METU, Ankara, Turkey} \\
			
		 \textit{E-mail: liaimingogo@gmail.com; uelif@metu.edu.tr}\vspace{-1em}
	\thanks{This work is supported by the European Union through ERC Advanced Grant 101122990-GO SPACE-ERC-2023-A. Views and opinions expressed are, however, those
of the authors only and do not necessarily reflect those of the European Union
or the European Research Council Executive Agency. Neither the European
Union nor the granting authority can be held responsible for them.}	}

		\maketitle
		\allowdisplaybreaks

		\begin{abstract}
			In distributed sensor networks, sensors often observe a dynamic process within overlapping regions. Due to random delays, these \textit{correlated} observations arrive at the fusion center \textit{asynchronously}, raising a central question: \textit{How can one fuse asynchronous yet correlated information for accurate remote fusion estimation?} This paper addresses this challenge by studying the joint design of sampling, scheduling, and estimation policies for monitoring a correlated Wiener process. Though this problem is coupled, we establish a \textit{separation principle} and identify the joint optimal policy: the optimal fusion estimator is a weighted-sum fusion estimator conditioned on Age of Information (AoI), the optimal scheduler is a Maximum Age First (MAF) scheduler that prioritizes the most stale source, and the optimal sampling can be designed given the optimal estimator and the MAF scheduler. To design the optimal sampling,  we show that, under the infinite-horizon average-cost criterion, optimizing AoI is equivalent to optimizing MSE under pull-based communications, \textit{despite the presence of strong inter-sensor correlations}. This structural equivalence allows us to identify the MSE-optimal sampler as one that is AoI-optimal. This result underscores an insight: information freshness can serve as a design surrogate for optimal estimation in \textit{correlated} sensing environments.
		\end{abstract}
		\begin{IEEEkeywords}
			Age of Information, Markov Decision Process, Information Fusion, Remote Estimation
		\end{IEEEkeywords}
		
		\IEEEpeerreviewmaketitle
		
		\section{Introduction}\label{sectionI}
		\subsection{Motivation}
		 	Minimizing Age of Information (AoI) has been a central objective in status update systems, driven by the intuitive notion that \textit{fresher data is more valuable} \cite{kaul2012real}. However, the precise impact of AoI on application-layer performance is not yet fully understood. In single-source settings, prior studies have shown that for tasks such as timely estimation \cite{sun2019wiener,ornee2021sampling,DBLP:journals/ton/TangST24,chen2023sampling,10807024} and inference \cite{shisher2024timely,shisher2022does,shisher2023learning,ari2024goal}, expected task performance can be analytically expressed using AoI under \textit{pull-based communication}. For instance, when observing a Wiener process, the expected mean-square error (MSE) given a message equals its AoI \cite{sun2019wiener}; for Ornstein-Uhlenbeck (OU) processes, the expected MSE is a monotonic function of AoI \cite{ornee2021sampling}. More generally, \cite{shisher2024timely} demonstrates that inference performance may degrade according to a penalty function that is both non-linear and non-monotonic in AoI. These results imply that minimizing AoI or its potentially non-linear penalty can improve the system's estimation/reference performance.

		 	However, in multi-source systems where multiple information streams share a common communication channel, the situation becomes considerably more intricate. New challenges arise. \textit{First}, even simply minimizing the sum of AoI across sources becomes nontrivial, requiring the joint design of sampling and scheduling policies under a shared channel \cite{9358178}. \textit{Second}, the insight into the relationship between AoI and task performance observed in single-source settings does not necessarily hold for multi-source networks. Different sources may have varying importance, dynamics, or update requirements \cite{DBLP:conf/mobihoc/OrneeS23,banawan2023timely}, making it unclear whether minimizing AoI across all sources leads to meaningful application-level gains. \textit{Third}, and more fundamentally, updates from distributed sensors are often \textit{correlated} in practice, \textit{e.g.}, temperature and humidity in environmental monitoring, or video and audio in multimedia systems. In such cases, updates from one source may implicitly convey partial information about others. Consequently, the value of an update is determined not solely by its individual freshness, but by its task-relevant marginal contribution to a certain reference goal \cite{goalaimin,Uysalgoal}. 
		 	\begin{figure}
		 		\centering		 		\includegraphics[width=0.85\linewidth]{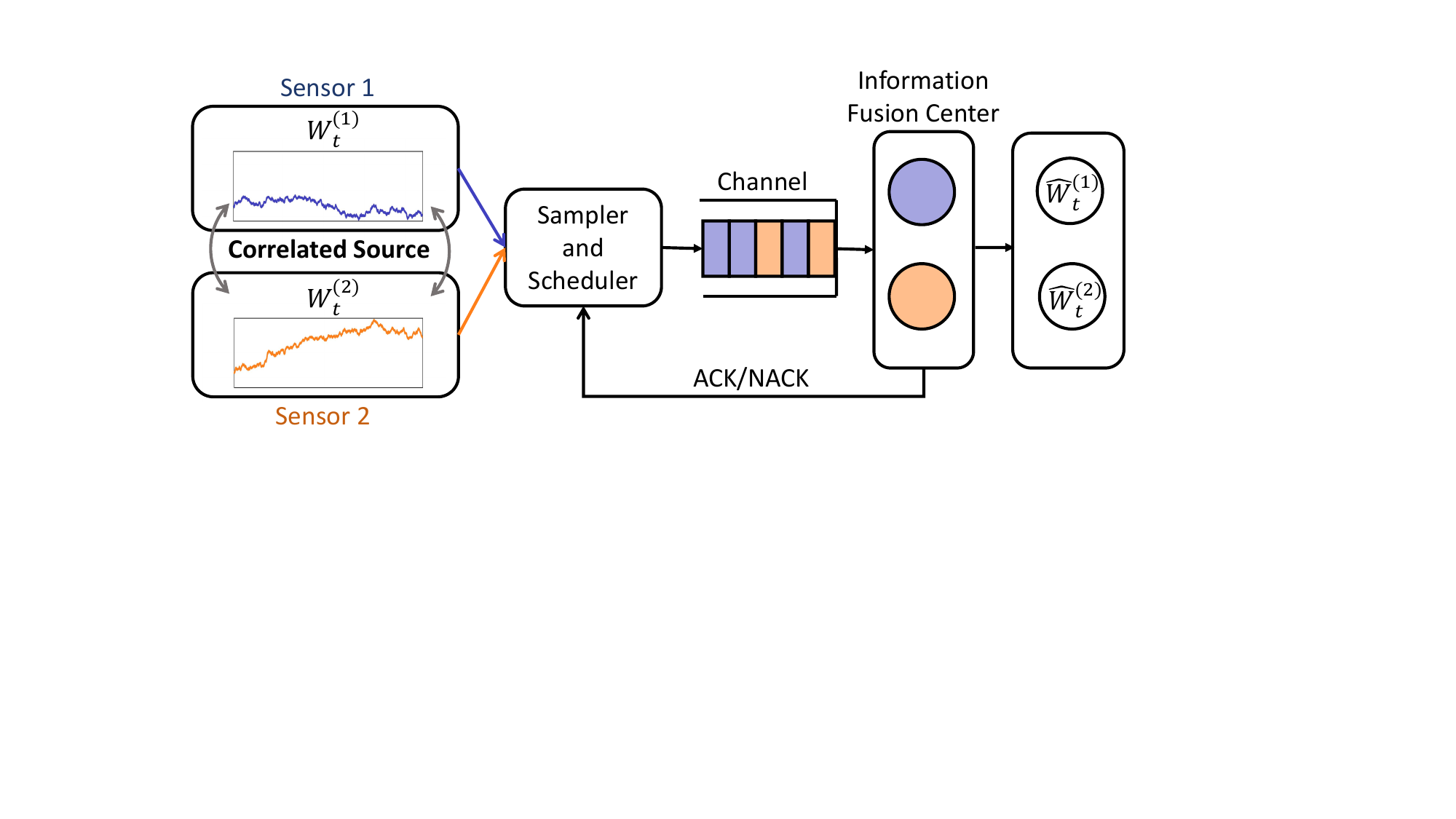}
		 		\caption{System Model.}
		 		\label{fig:}
		 	\end{figure}
		 	\subsection{Related Work}
		 	To address the inter-source correlations in multi-source status update systems, a heuristic metric named \textit{Age of Correlated Information} (AoCI) was proposed in \cite{he2019joint}. Under this model, the AoCI resets only when all relevant correlated updates are successfully received. This metric has been well-studied in various sensor network settings \cite{10304326,zhou2020age,10841476}. However, AoCI overlooks the differential contribution of each source to the final task and imposes an ideal “\textit{all-or-nothing}” assumption that penalizes the system until all relevant updates are received—even when partial observations are already sufficient to capture the key characteristics of the system.
		 	
		 	To address these limitations, some recent works \cite{10978610,hoang2021age,tripathi2022optimizing} propose probabilistic frameworks where the delivery of an update from source $i$ reduces the AoI of source $j$ with probability $p_{ij}$. In \cite{kalor2022timely}, correlated IoT devices are grouped such that an update from one sensor implies concurrent updates in the same group. Additionally, \cite{tong2022age} introduces an attenuation factor $\alpha_n(t)$ to model inter-source correlation, enabling partial AoI reduction based on the statistical overlap between sources. 
		 	
		 	While the aforementioned models capture information overlap caused by inter-source correlation, they often treat freshness as an abstract metric, largely decoupled from the system’s estimation and other inference objectives. Consequently, although heuristic age metrics and AoI dynamics are introduced, their direct impact on application-layer estimation accuracy remains insufficiently explored.
		 	
		 	 A notable exception that explicitly links information freshness with estimation accuracy for correlated sources is the recent work by Ramakanth et al.~\cite{10700668}, which addresses joint scheduling and estimation for monitoring a correlated Wiener process. Their framework, built on a time-slotted system with \textit{idealized}, \textit{zero-delay} communication channel, offers important insights into how information freshness can help improve the estimation error of correlated sources. However, real-world networks are characterized by \textit{unavoidable} random delays, due to factors like buffer overflow, link-layer retransmissions, and routing variability—especially in dynamic environments such as Non-Terrestrial Networks (NTNs). Yet, in scenarios with random transmission delays \cite{sun2017update,DBLP:journals/ton/PanBSS23,li2024sampling}, this sampling decision becomes far from trivial: the \textit{zero-wait} sampling, while simple, can severely degrade estimation performance.

			\subsection{Contributions of This Work} 
			\begin{itemize}
				\item \textbf{System Model}: We investigate the joint design of sampling, scheduling, and estimation policies for \textit{correlated} Wiener processes in the presence of random delays. Unlike the time-slotted model in \cite{10700668}, which considers ideal immediate delivery and focuses only on scheduling, our continuous-time framework incorporates stochastic delays typical in real-world networks. Moreover, we include sampling decisions as part of the design, which determines when each sensor should generate and transmit updates. This is motivated by recent findings showing that non-zero-waiting policies can outperform zero-waiting policies under random delay \cite{sun2017update,DBLP:journals/ton/PanBSS23,li2024sampling}.
				
				\item\textbf{Separation-Structured Optimal Design:} We establish a separation principle for the joint design of sampling, scheduling, and estimation in correlated sensing. This principle enables a modular decomposition of the problem, leading to a structured policy that achieves optimal long-term average MSE in estimation and scheduling, and near-optimal performance in sampling. The joint policy comprises: (i) a weighted-sum fusion estimator; (ii) a Maximum Age First (MAF) scheduler; and (iii) a threshold-triggered Water-Filling (WF) sampler. 
				
				\item\textbf{Methodology Design:} Solving the joint optimization problem poses a key technical challenge, as the state space of the formulated infinite-horizon MDP is both \textit{transient} and \textit{unbounded}, rendering classical average-cost approaches such as the \textit{Average Cost Optimality Equation} (ACOE) and \textit{Relative Value Iteration} (RVI) inapplicable. Existing remedies, including \textit{discounting} and \textit{state truncation} \cite{leng2019age,10778570}, often introduce bias or compromise optimality. In contrast, we construct a new MDP with \textit{bounded} and \textit{recurrent} dynamics. We prove that the transformed MDP is equivalent to the original MDP in terms of average-cost performance. This reduction allows us to shift the analysis from an intractable MDP to a well-behaved MDP, thereby enabling efficient optimization while preserving theoretical optimality. 
				
				\item\textbf{Theoretical Insight:} We reveal that the transformed well-behaved MDP is mathematically equivalent to a classical long-term average sum-AoI minimization problem. This equivalence implies that, under the infinite-horizon average-cost criterion, an AoI-optimal sampling policy for multiple sources \cite{9358178} also minimizes the MSE, \textit{even in the presence of strong inter-source correlations}.
				
			\end{itemize}
			\section{System Model and Problem Formulation }\label{section III}

            
            We consider a system where spatially distributed sensors monitor a dynamic physical process (e.g., temperature or pressure), with correlated observations due to overlapping sensing regions. Sensor samples are transmitted over a shared random-delay channel to a remote estimator, which fuses the collected information and reconstructs the underlying process in real-time. The system model is shown in Fig. \ref{fig:}. 
			
			\subsection{Coupled Correlated Sources}
			The monitored dynamic source is modeled as a correlated bivariate Wiener process \( \mathbf{W}_t = (W_t^{(1)}, W_t^{(2)})^\top \), constructed from two independent standard Wiener processes \( B_t^{(1)} \) and \( B_t^{(2)} \), defined on a common filtered probability space \( (\Omega, \mathcal{F}, \{\mathcal{F}_t\}_{t \ge 0}, \mathbb{P}) \). Specifically, the dynamics are given by:
			\begin{equation}
				\label{eq:Wiener-dynamics}
				\begin{aligned}
					dW_t^{(1)} &= dB_t^{(1)}, \\
					dW_t^{(2)} &= \rho \, dB_t^{(1)} + \sqrt{1 - \rho^2} \, dB_t^{(2)},
				\end{aligned}
			\end{equation}
			where \( \rho \in [0, 1] \) quantifies the correlation due to distributed sensors' spatial deployment or sensing modalities.
			
			
			
			\subsection{Multi-source Network Model}
			Consider two distributed sensors where the sensor $m\in\{1,2\}$ monitors $W_t^{(m)}$. Let \( S_i \) denote the sampling time of the \( i \)-th sample, and let \( a_i \in \{1, 2\} \) indicate the scheduled sensor corresponding to that sample. Each sample \( W_{S_i}^{(a_i)} \) is encapsulated into a packet \( (S_i, W_{S_i}^{(a_i)}) \) and transmitted to a remote data center for remote information fusion estimation. Each packet experiences a random transmission delay \( Y_i \in \mathcal{Y} \) due to factors such as network handover, congestion, packet size variability, and retransmissions. The delay support \( \mathcal{Y} \subset \mathbb{R}^+ \) is assumed to be finite and bounded. We adopt a non-preemptive scheduling policy: a new sample cannot be generated while the communication channel is occupied by an ongoing transmission. Let \( D_i = S_i + Y_i \) denote the delivery time of the \( i \)-th sample. To enhance remote estimation performance, the sampler is allowed to insert a deliberate waiting time \( Z_i \in \mathcal{Z} \) before generating the next sample\footnote{As discussed in \cite{sun2019wiener}, sampling immediately upon delivery (i.e., \textit{zero-wait} sampling), although \textit{throughput-optimal}, is suboptimal for remote estimation.}, where \( S_{i+1} = D_{i} + Z_i \). The waiting time set \( \mathcal{Z} \subset \mathbb{R}^+ \) is also assumed to be finite and bounded. Without loss of generality, we assume $D_0=0$ and $S_0=0$.
			
			At any time $t$, the most recently delivered sample from source $m$ is generated at time:
			\begin{equation}
				U_m(t)=\max\{S_i:a_i=m,D_i\le t\},\quad \forall m\in\{1,2\}.
			\end{equation}
			Accordingly, the AoI of source \( m \) at time \( t \) is defined as:
			\begin{equation}
				\Delta_m(t)\triangleq t-U_m(t),\forall m\in\{1,2\}.
			\end{equation}
			
		\subsection{Sampling, Scheduling, and Estimating Policies}	
		We consider a \textit{pull-based} communication framework in which all decisions, including sampling, scheduling, and fusion-based estimation, are made by the information fusion center. These decisions are made \textit{causally}, relying on past sampling and scheduling actions, as well as previously received sample updates. At any time \( t \in \mathbb{R}^+ \), the information available for making such \textit{causal} decisions(\textit{history}) is given by:
		\begin{equation}
			\mathcal{I}_t = \left\{ \left(a_i, S_i, W_{S_i}^{(a_i)},D_i \right) \,\middle|\, D_i \le t \right\}.
		\end{equation}	
		 Here, \( a_i \in \{1,2\} \) indicates the source index, \( S_i \) is the sampling time, \( D_i \) is the delivery time, and \( W_{S_i}^{(a_i)} \) is the corresponding sample value.  
		 
		 A \textit{causal policy} at the fusion center includes the following components: 
		\begin{itemize}
			\item \textbf{Estimator} \( g \): This function determines the real-time estimate \( \hat{\mathbf{W}}_t \) based on the \textit{history} \( \mathcal{I}_{t} \):
			\begin{equation}
				\hat{\mathbf{W}}_t = (\hat{W}_t^{(1)}, \hat{W}_t^{(2)})^\top = g(\mathcal{I}_t), \quad \forall t \in \mathbb{R}^+.
			\end{equation}
			
			\item \textbf{Scheduler} \( \pi \): A scheduler is a policy function that determines which source is selected for transmission. The selected source during the $i$-th transmission is based on the \textit{history} \( \mathcal{I}_{D_{i-1}} \):
			\begin{equation}
				a_i=\pi(\mathcal{I}_{D_{i-1}}),\quad\forall i\in\mathbb{N}.
			\end{equation}
			This policy leads to a scheduling order $(a_1,a_2,\cdots)$.
			
			\item \textbf{Sampler} \( f \): 
			A sampler is a function that determines when each sensor is sampled. The sampling time of the $i$-th packet is determined based on the \textit{history} \( \mathcal{I}_{D_{i-1}} \):
			\begin{equation}
			S_i = f(\mathcal{I}_{D_{i-1}}),\quad\forall i\in\mathbb{N}.
			\end{equation}
			Equivalently, the sampler can be defined via the waiting time after the previous delivery, where \( Z_i = S_i - D_{i-1}\) and $Z_i\in\mathcal{Z} $, leading to the sequence \( (Z_1, Z_2, \dots) \).
		\end{itemize}			
		
	\subsection{Problem Formulation}
	We aim to design sampling, scheduling, and estimation policies that jointly minimize the long-term average MSE. For any admissible policy triple \( (g, \pi, f) \), the performance is evaluated via the long-term average MSE:
		\begin{equation}
			\overline{\mathrm{MSE}}(g, \pi, f) \triangleq \limsup_{T \to \infty}\frac{1}{T} \, \mathbb{E}_{g,\pi,f} \left[ \int_0^T \left\| \mathbf{W}_t - \hat{\mathbf{W}}_t \right\|_2^2 \, \mathrm{d}t \right],
		\end{equation}
		where \( \mathbf{W}_t \in \mathbb{R}^2 \) is the source process and \( \hat{\mathbf{W}}_t\in\mathbb{R}^2 \) is its remote fusion estimate. 
		In this work, we focus on the following optimization problem:

			\begin{problem}\label{p1}
			Find a policy triple \( (g, \pi, f) \) that minimizes the long-term average MSE:
				\begin{equation}\label{MSEltoptimal}
					\varphi^{\star} \triangleq \inf_{g,\pi,f} \, \overline{\mathrm{MSE}}(g,\pi,f),
				\end{equation}
				where \(\varphi^{\star}\) denotes the optimal long-term average MSE.
				\end{problem}
		
		\section{Optimal Estimating, Sampling, and Scheduling}\label{Section3}
		\subsection{Separation Principle and Optimal Estimator}
			Under a given scheduling policy \( \pi \) and a sampling policy \( f \), the minimum mean-square error (MMSE) estimator \cite[IV.B.8]{poor2013introduction}, which minimizes the instantaneous mean-square error in the system, is the conditional expectation of each source process given the available \textit{history}:
		\begin{equation}
			\begin{aligned}				\hat{W}_t^{(1)}=\mathbb{E}\left[{W}_t^{(1)}\mid\mathcal{I}_t\right], \quad	\hat{W}_t^{(2)}=\mathbb{E}\left[{W}_t^{(2)}\mid\mathcal{I}_t\right].		
			\end{aligned}
		\end{equation}
		The following theorem provides an explicit characterization of the MMSE estimator.

		\begin{theorem}[\textit{MMSE Fusion Estimator for Correlated Wiener Processes}]\label{the1}
			Let \( \rho \in [0,1] \) denote the correlation coefficient between \( W_t^{(1)} \) and \( W_t^{(2)} \). Let $\Delta_{m}(t)$ denote the AoI of source~$m$. Given any scheduling policy $\pi$ and sampling policy $f$, the MMSE estimate of $W_t^{(1)} $ is:
			\begin{equation}\label{eq11}
				\hat{W}_t^{(1)} =
				\begin{cases}
					W^{(1)}_{t - \Delta_1(t)}, &
					\Delta_1(t) \le \Delta_2(t) \\
					\begin{aligned}
						&G_{t,\rho}(\Delta_1(t), \Delta_2(t)) \cdot W^{(1)}_{t - \Delta_1(t)} + \\
						&Q_{t,\rho}(\Delta_1(t), \Delta_2(t)) \cdot W^{(2)}_{t - \Delta_2(t)}
					\end{aligned}, &
					\vcenter{\hbox{$\Delta_1(t) > \Delta_2(t)$}}
				\end{cases}
			\end{equation}	
			The MMSE estimation of $W_t^{(2)} $ is:
			\begin{equation}\label{eq12}
				\hat{W}_t^{(2)} =
				\begin{cases}
					W^{(2)}_{t - \Delta_2(t)}, &
					\Delta_2(t) \le \Delta_1(t) \\
					\begin{aligned}
						&G_{t,\rho}(\Delta_2(t), \Delta_1(t)) \cdot W^{(2)}_{t - \Delta_2(t)} + \\
						&Q_{t,\rho}(\Delta_2(t), \Delta_1(t)) \cdot W^{(1)}_{t - \Delta_1(t)}
					\end{aligned}, &
					\vcenter{\hbox{$\Delta_2(t) > \Delta_1(t)$}}
				\end{cases}
			\end{equation}
			where the functions $G_{t,\rho}(x,y)$ and $Q_{t,\rho}(x,y)$ are defined by:
			\begin{equation}
			\resizebox{1\hsize}{!}{$	\begin{aligned}			G_{t,\rho}(x,y)\triangleq\frac{(1 - \rho^2)(t - y)}{t - y - \rho^2(t - x)},\quad
					Q_{t,\rho}(x,y)\triangleq\frac{\rho(x - y)}{t - y - \rho^2 (t - x)},
				\end{aligned}$}
			\end{equation}
			and we refer to them as {fusion weights}.
		\end{theorem}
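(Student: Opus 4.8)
The plan is to exploit two structural facts. First, the bivariate process $(W_t^{(1)}, W_t^{(2)})$ is jointly Gaussian with independent increments, so the MMSE (conditional expectation) coincides with the \emph{linear} MMSE and can be computed purely from second-order statistics. Second, writing $s_m \triangleq t-\Delta_m(t)$ for the generation time of the freshest source-$m$ sample, the increment of either coordinate accumulated after $\max(s_1,s_2)$ is a mean-zero Gaussian independent of the entire history $\mathcal{I}_t$. I would first reduce the conditioning set to the two freshest samples $W_{s_1}^{(1)}$ and $W_{s_2}^{(2)}$, so that $\hat W_t^{(m)} = \mathbb{E}[\,W_{\max(s_1,s_2)}^{(m)}\mid W_{s_1}^{(1)}, W_{s_2}^{(2)}]$, and then evaluate this conditional mean in each ordering of $\Delta_1(t),\Delta_2(t)$.

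For the easy branch (the target source is the fresher one, e.g.\ $\Delta_1\le\Delta_2$ for $W_t^{(1)}$), the freshest value $W_{s_1}^{(1)}=B_{s_1}^{(1)}$ is observed \emph{directly} and the residual increment $W_t^{(1)}-W_{s_1}^{(1)}$ is independent of everything in $\mathcal{I}_t$ (all samples having generation time $\le s_1$), so $\hat W_t^{(1)}=W_{s_1}^{(1)}$, matching the first branch. The statement for $W_t^{(2)}$ when $\Delta_2\le\Delta_1$ follows identically, since $W_{s_2}^{(2)}$ is itself observed.

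The substantive branch is when the target source is staler, e.g.\ $\Delta_1>\Delta_2$ (so $s_1<s_2$). Here I would split $W_t^{(1)}=W_{s_1}^{(1)}+(W_{s_2}^{(1)}-W_{s_1}^{(1)})+(W_t^{(1)}-W_{s_2}^{(1)})$: the last term is independent of $\mathcal{I}_t$ and vanishes in expectation, while the middle increment $\xi\triangleq B_{s_2}^{(1)}-B_{s_1}^{(1)}$ must be estimated from $W_{s_2}^{(2)}$. Using $B^{(1)}\perp B^{(2)}$ and independent increments, conditioning on $W_{s_1}^{(1)}$ reduces this to a scalar Gaussian problem: estimate $\xi$ from $V\triangleq W_{s_2}^{(2)}-\rho W_{s_1}^{(1)}=\rho\xi+\sqrt{1-\rho^2}\,B_{s_2}^{(2)}$, yielding $\mathbb{E}[\xi\mid V]=\tfrac{\mathrm{Cov}(\xi,V)}{\mathrm{Var}(V)}V=\tfrac{\rho(s_2-s_1)}{s_2-\rho^2 s_1}V$. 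Collecting terms and substituting $s_1=t-\Delta_1,\ s_2=t-\Delta_2$ reproduces exactly the weights $G_{t,\rho}(\Delta_1,\Delta_2)$ and $Q_{t,\rho}(\Delta_1,\Delta_2)$. The analogue for $W_t^{(2)}$ when $\Delta_2>\Delta_1$ is \emph{not} a verbatim relabeling, because $W^{(2)}$ is driven by both Brownian motions; there I would estimate $B_{s_2}^{(2)}$ from the pair $(W_{s_1}^{(1)},W_{s_2}^{(2)})$ via a $2\times 2$ Gaussian conditioning and verify the coefficients again collapse to $G_{t,\rho}(\Delta_2,\Delta_1)$ and $Q_{t,\rho}(\Delta_2,\Delta_1)$.

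The main obstacle I anticipate is the very first reduction—justifying that only the freshest sample per source is informative. For a single Wiener source this is immediate from the Markov property, but with correlated sources it is delicate: a source-$2$ sample that is staler than $s_2$ yet \emph{fresher} than $s_1$ is correlated, through $\rho$, with the increment $B_{s_2}^{(1)}-B_{s_1}^{(1)}$ being estimated, and so in principle could refine the cross term. Making the freshest-sample reduction rigorous therefore requires carefully invoking the independent-increments structure (or the freshest-only history standard in AoI models) to show such intermediate samples contribute nothing beyond $W_{s_1}^{(1)},W_{s_2}^{(2)}$; this sufficiency argument, rather than the Gaussian algebra, is where I expect the real work to lie.
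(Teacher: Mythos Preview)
Your increment-decomposition route is correct and recovers the stated weights, but it differs from the paper's argument. The paper applies the block Gaussian-conditioning identity $\mathbb{E}[X\mid Y]=\Sigma_{XY}\Sigma_{YY}^{-1}Y$ directly to the triple $(W_t^{(1)},W_{t_1}^{(1)},W_{t_2}^{(2)})$, inverts the $2\times2$ covariance $\Sigma_{YY}$ once, and reads off both branches from a single formula (the case split falls out of $t_{\min}=\min\{t_1,t_2\}$); the result for $\hat W_t^{(2)}$ is then declared ``symmetric'' without a separate calculation. Your approach---peeling off independent Brownian increments and reducing the cross term to the scalar regression $\mathbb{E}[\xi\mid V]$---is more elementary and explains transparently \emph{why} the fresher-source branch collapses to the last sample, at the price of treating the four branches and the non-symmetric $\hat W_t^{(2)}$ case individually. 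Either way, the Gaussian algebra is routine once one conditions on only the two freshest samples.

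On your flagged obstacle: you are right that this is the substantive step, and you should know that the paper's proof makes exactly the same reduction \emph{without comment}---it simply conditions on $(W_{t_1}^{(1)},W_{t_2}^{(2)})$ rather than on all of $\mathcal{I}_t$. Your concern is in fact well founded: if a scheduler transmits source~$2$ several times consecutively, an intermediate sample $W_{s_2'}^{(2)}$ with $s_1<s_2'<s_2$ does refine the estimate of $B_{s_2}^{(1)}-B_{s_1}^{(1)}$ beyond what $W_{s_2}^{(2)}$ alone provides (for $0<\rho<1$ and $s_1>0$), so the two-sample formula is, strictly speaking, not $\mathbb{E}[W_t^{(1)}\mid\mathcal{I}_t]$ under an \emph{arbitrary} scheduler. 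Under the MAF scheduler---which alternates the two sources, so at most one sample of each type lies between consecutive samples of the other---the reduction is exact, and that is the regime in which the paper actually uses the estimator. So your instinct that the sufficiency argument, not the Gaussian algebra, is where the real work lies is correct; the paper simply does not take it up.
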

		
		\begin{proof}
			See Appendix \ref{proof:the1}.
		\end{proof}		
		The results above reveal how the remote data fusion center exploits asynchronous and potentially \textit{stale} information from distributed sensors to perform real-time estimation of correlated sources. Specifically, the MMSE estimator for \( W_t^{(m)} \) depends on the relative freshness (i.e., AoI) of the most recently received samples from both sources. If the latest sample from source \( m \) is fresher than the other source $m'$, i.e., \( \Delta_m(t) \le \Delta_{m'}(t) \), then the MMSE estimate of \( W_t^{(m)} \) is simply the value of its most recently received sample:
		\begin{equation}
		\hat{W}_t^{(m)} = W_{t - \Delta_m(t)}^{(m)}.
		\end{equation}
		This corresponds to the intuitive case in which fresher information dominates. However, if the sample from the source $m'$ is fresher, i.e., \( \Delta_m(t) > \Delta_{m'}(t) \), then the MMSE estimate of \( W_t^{(m)} \) must fuse both the staler sample \( W^{(m)}_{t - \Delta_m(t)} \) and the fresher correlated sample \( W^{(m')}_{t - \Delta_{m'}(t)} \). These two components are weighted by coefficients that depend on the correlation \( \rho \), the AoIs $\Delta_1(t)$ and $\Delta_2(t)$, and the current time $t$. Moreover, by choosing specific values of $\rho$, our results recover known estimation structures studied in the existing literature.
		
		\begin{itemize}
			\item If \( \rho = 1 \), the fusion coefficients simplify to \( G_{t,1}(x, y) \equiv 0 \) and \( Q_{t,1}(x, y) \equiv 1 \). Substituting these into the MMSE estimator (Theorem \ref{the1}) leads to the degenerate result: \begin{equation}
				\hat{W}_t^{(1)}=\hat{W}_t^{(2)}=\begin{cases}
					{W}_{t-\Delta_1(t)}^{(1)},&\text{if } \Delta_1(t)\le\Delta_2(t)
					\\{W}_{t-\Delta_2(t)}^{(2)},&\text{if } \Delta_1(t)>\Delta_2(t).
				\end{cases}
			\end{equation}
			Note that $W_t^{(1)}=W_t^{(2)}$ when $\rho=1$; in this case, the problem reduces to the classical single-source remote estimation scenario, as studied in \cite{sun2019wiener}.
			
			\item If \( \rho = 0 \), the fusion weights reduce to \( G_{t,0}(x, y) \equiv 1 \) and \( Q_{t,0}(x, y) \equiv 0 \). Substituting these into the MMSE estimator yields: \begin{equation}
				\hat{W}_t^{(m)} = W_{t - \Delta_m(t)}^{(m)}, \quad \forall m.
			\end{equation}
			This corresponds to the classical setting for remote estimation of \textit{independent} sources, where each source can be estimated independently, as discussed in~\cite{DBLP:conf/mobihoc/OrneeS23} and~\cite{banawan2023timely}.
		\end{itemize}
		\subsection{Maximum Age First Scheduling is All You Need}
		
		In pull-based communication systems, the receiver pulls updates based on its own information requirements. Given the available information \textit{history} $\mathcal{I}_t$, the expected MSE at time $t$ is given by:
		\begin{equation}\label{Delta}
			\begin{aligned}
				&\epsilon(t) \triangleq \mathbb{E}\left[\|\mathbf{W}_t - g_{\mathrm{MMSE}}(\mathcal{I}_t)\|_2^2 \,\middle|\, \mathcal{I}_t \right]
				= \Delta_1(t) + \Delta_2(t) \\
				&- \frac{\rho^2 \left( \Delta_1(t) - \Delta_2(t) \right)^2}
				{(1 - \rho^2)t + \rho^2 \max\{\Delta_1(t), \Delta_2(t)\} - \min\{\Delta_1(t), \Delta_2(t)\}}.
			\end{aligned}
		\end{equation}
		This result reveals a significant insight: \textit{the estimation error depends not only on the individual values of $\Delta_1(t)$ and $\Delta_2(t)$, but also their interaction adjusted by $\rho$.} When $\rho = 0$, the sources are independent, and the estimation error simplifies neatly to the sum of AoI values:
		\begin{equation}\label{c}
			\epsilon(t) = \Delta_1(t) + \Delta_2(t),
		\end{equation}
		a form that aligns with prior results in uncorrelated multi-source settings, \textit{e.g.},  \cite{DBLP:conf/mobihoc/OrneeS23} and \cite{banawan2023timely}. Consequently, optimizing update scheduling in this setting is equivalent to minimizing the sum of AoI \cite{9358178}. 
		
		In this work, we deliberately focus on the more complex and practical regime in which the correlation between sources is strictly positive ($\rho>0$). Let $S_{i,m} = \max\{S_k : a_k = m,\, S_k < D_i\}$ denote the most recent sampling time from source $m$ prior to the $i$-th update interval. Since only one source is sampled per interval, the AoI for source $m$ evolves linearly during $[D_i, D_{i+1})$ as:
		\begin{equation} \label{delta}
			\Delta_m(t) = t - S_{i,m}, \quad \text{for } t \in [D_i, D_{i+1}).
		\end{equation}	
		Substituting~\eqref{delta} into \eqref{Delta}, the instantaneous estimation error $\epsilon(t)$ can be expressed explicitly as a function of $S_{i,1}$ and $S_{i,2}$:
		\begin{equation}
			\begin{aligned}
					\epsilon(t)& = (t - S_{i,1}) + (t - S_{i,2}) - \\&\frac{\rho^2 (S_{i,1} - S_{i,2})^2}{\max\{S_{i,1}, S_{i,2}\} - \rho^2 \min\{S_{i,1}, S_{i,2}\}}, \quad \forall t \in [D_i, D_{i+1}).
			\end{aligned}
		\end{equation}
		The next step is to evaluate the cumulative estimation error accumulated over each update interval $[D_i, D_{i+1}) $. Let the length of each update interval be $D_{i+1}-D_i = Y_{i+1}+Z_i$. Let $\mu_Y=\mathbb{E}[Y]$ and $\sigma_Y=\mathbb{E}[Y^2]$ denote the first and second moments of the delay distribution. Then, the expected accumulated MSE within the interval $[D_i,D_{i+1})$ takes the following form:
		\begin{equation}\label{eq21}
			\begin{aligned}
				&\mathbb{E}_{}\left[\left.\int_{D_{i}}^{D_{i}+Y_{i+1}+Z_{i}}\epsilon(t)dt\right|\mathcal{I}_{D_{i}}\right]\\&=\sigma_Y+\mu_YZ_i+\left(Z_{i}+\mu_Y\right)\left(2Y_{i}+Z_i+q_{\rho}\left(S_{i,1},S_{i,2}\right)\right).
			\end{aligned}
		\end{equation}
		Here, the function $q_{\rho}(x,y)$ is defined as:
		\begin{equation}
			\begin{aligned}					q_{\rho}(x,y)=\left|x-y\right|\times\left(1-\frac{\rho^2|x-y|}{\max\{x,y\}-\rho^2\min\{x,y\}}\right).
			\end{aligned}
		\end{equation}
		For notational simplicity, let $\mathcal{C}_{\rho}(S_{i,1},S_{i,2},Y_i,Z_i)$ denote the right-hand side of \eqref{eq21},  representing the total expected cost incurred during the $i$-th interval. Building on this definition, the long-term average MSE minimization problem can be reformulated as minimizing the long-term time-average of per-interval expected costs:
		\begin{align}\label{eq23}
			&\min_{\pi,f}\lim\limits_{T\to\infty}\frac{1}{T} \, \mathbb{E} \left[ \int_0^T \left\| \mathbf{W}_t - \hat{\mathbf{W}}_t \right\|_2^2 \, \mathrm{d}t \right]\notag\\
			&=\min_{\pi,f}\lim\limits_{n\to\infty}\frac{\sum_{i=0}^{n-1}\mathbb{E}\left[\mathbb{E}\left[\left.\int_{D_{i}}^{D_{i+1}}\epsilon(t)dt\right|\mathcal{I}_{D_{i}}\right]\right]}{\mathbb{E}[D_n]}\notag\\
			&=\min_{\pi,f}\lim_{ n \to \infty } \frac{\frac{1}{n}\sum_{i=0}^{n-1}\mathbb{E}\left[\mathcal{C}_{\rho}(S_{i,1},S_{i,2},Y_i,Z_i) \right]}{\frac{1}{n}\sum_{i=0}^{n-1}\mathbb{E}[Y_{i+1}+Z_i]}.
		\end{align}
		To address this fractional optimization, we adopt the classical Dinkelbach method~\cite{dinkelbach1967nonlinear}. For any $\lambda\ge0$, , we define the following parameterized auxiliary problem:
		\begin{problem}[\textit{Average–cost reformulation via Dinkelbach}]\label{p3}
			\begin{equation}\label{eq:Dinkelbach-Theta}
				\resizebox{1\hsize}{!}{$\begin{aligned}
					\Theta(\lambda)=
					\min_{\pi,f}
					\lim_{n\to\infty}
					\frac{1}{n}\sum_{i=0}^{n-1}
					\mathbb{E}\!\Bigl[
					\mathcal{C}_{\rho}\bigl(S_{i,1},S_{i,2},Y_i,Z_i\bigr)
					-\lambda\bigl(Y_{i+1}+Z_i\bigr)
					\Bigr].
				\end{aligned}$}
			\end{equation}
		\end{problem}		
		The following lemma establishes the equivalence between the fractional problem \eqref{eq23} and its reformulation \eqref{eq:Dinkelbach-Theta}.
		
		\begin{Lemma}[\textit{Equivalence via Dinkelbach}]\label{lem:Dinkelbach}
			Let \(\varphi^\star\) denote the optimal value of the fractional problem in~\eqref{eq23}.  
			Then the following assertions hold: 
			\begin{enumerate}
				\item $\varphi^{\star} \gtreqless \lambda \text { if and only if } \Theta(\lambda) \gtreqless 0 \text {. }$
				\item When $\Theta(\lambda)=0$, then any policy that minimizes~\eqref{eq:Dinkelbach-Theta} also solves the original fractional problem~\eqref{eq23}.
			\end{enumerate}
		\end{Lemma}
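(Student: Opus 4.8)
The plan is to treat this as an instance of Dinkelbach's theorem for fractional programming, exploiting the fact that the denominator is uniformly bounded both away from zero and from above. For any admissible policy pair $(\pi,f)$, I would write $C(\pi,f)$ for the limiting average cost $\lim_{n\to\infty}\frac{1}{n}\sum_{i=0}^{n-1}\mathbb{E}[\mathcal{C}_{\rho}(S_{i,1},S_{i,2},Y_i,Z_i)]$ and $T(\pi,f)$ for the limiting average interval length $\lim_{n\to\infty}\frac{1}{n}\sum_{i=0}^{n-1}\mathbb{E}[Y_{i+1}+Z_i]$, so that $\varphi^\star=\inf_{\pi,f} C(\pi,f)/T(\pi,f)$ and $\Theta(\lambda)=\inf_{\pi,f}[C(\pi,f)-\lambda T(\pi,f)]$. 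Since $Z_i\ge 0$ gives $T(\pi,f)\ge\mu_Y>0$ and the boundedness of $\mathcal{Y},\mathcal{Z}$ gives $T(\pi,f)\le T_{\max}<\infty$, we have $0<\mu_Y\le T(\pi,f)\le T_{\max}$ uniformly over all policies. The single algebraic identity driving everything is
\[
C(\pi,f)-\lambda T(\pi,f) = T(\pi,f)\Big(\tfrac{C(\pi,f)}{T(\pi,f)}-\lambda\Big),
\]
so that, because $T(\pi,f)>0$, the sign of the difference matches the sign of the ratio gap $\tfrac{C}{T}-\lambda$ for every fixed policy.

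For assertion (1), I would show that $\Theta$ is strictly decreasing and continuous, with its unique zero located precisely at $\varphi^\star$. Monotonicity follows since for $\lambda_1<\lambda_2$ the bound $T\ge\mu_Y$ gives $\Theta(\lambda_1)\ge \Theta(\lambda_2)+(\lambda_2-\lambda_1)\mu_Y$; continuity follows because $\Theta$ is an infimum of functions affine in $\lambda$, hence concave. To pin the root, I would prove $\Theta(\varphi^\star)=0$ in two steps: the inequality $C/T\ge\varphi^\star$ for every policy yields $C-\varphi^\star T\ge 0$ and hence $\Theta(\varphi^\star)\ge 0$, while taking a sequence of policies with $C_k/T_k\to\varphi^\star$ and using $C_k-\varphi^\star T_k=T_k(C_k/T_k-\varphi^\star)\le T_{\max}(C_k/T_k-\varphi^\star)\to 0$ yields $\Theta(\varphi^\star)\le 0$. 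Strict monotonicity then forces $\Theta(\lambda)>0\Leftrightarrow\lambda<\varphi^\star$, $\Theta(\lambda)=0\Leftrightarrow\lambda=\varphi^\star$, and $\Theta(\lambda)<0\Leftrightarrow\lambda>\varphi^\star$, which is exactly the claimed equivalence $\varphi^\star\gtreqless\lambda\Leftrightarrow\Theta(\lambda)\gtreqless 0$.

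Assertion (2) is then a short corollary. If $\Theta(\lambda)=0$ then part (1) gives $\lambda=\varphi^\star$; and if $(\pi^\star,f^\star)$ attains this infimum, then $C(\pi^\star,f^\star)-\varphi^\star T(\pi^\star,f^\star)=\Theta(\varphi^\star)=0$, so dividing by $T(\pi^\star,f^\star)>0$ gives $C(\pi^\star,f^\star)/T(\pi^\star,f^\star)=\varphi^\star$, i.e. the minimizer achieves the optimal fractional value and therefore solves~\eqref{eq23}.

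The main obstacle I anticipate is the identity $\Theta(\varphi^\star)=0$, and more specifically the two technical ingredients it rests on: the uniform two-sided bound on $T(\pi,f)$, which requires the boundedness of $\mathcal{Y}$ and $\mathcal{Z}$ together with $\mu_Y>0$, and the legitimacy of treating the limiting Cesàro averages $C(\pi,f)$ and $T(\pi,f)$ as the objects whose ratio defines $\varphi^\star$—that is, justifying $\lim_{n} (A_n/B_n)=(\lim_{n} A_n)/(\lim_{n} B_n)$ for the per-interval averages. I would handle the latter by appealing to the renewal structure of the update intervals, under which both Cesàro averages converge and the denominator's limit is strictly positive, so the ratio of limits coincides with the limit of ratios.
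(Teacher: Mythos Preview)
Your proposal is correct and is precisely the standard Dinkelbach fractional-programming argument that the paper invokes; the paper itself omits the proof entirely, stating only that it ``follows standard arguments in fractional programming.'' Your treatment---showing $\Theta$ is concave, strictly decreasing (via the uniform lower bound $T(\pi,f)\ge\mu_Y$), and has its unique root at $\varphi^\star$, then reading off both assertions---is exactly the content the paper defers to, and the technical caveats you flag (two-sided boundedness of $T$ and equality of the limit of ratios with the ratio of limits) are the right ones to check.
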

		\begin{proof}
			The proof follows standard arguments in fractional programming and is omitted here due to page constraints.
		\end{proof}
		Problem \ref{p3} admits a natural \emph{infinite–horizon average–cost MDP} formulation:
		\begin{MDP}[\textit{Original Infinite-Horizon MDP}]\label{mdp1} This MDP can be specified by the following components:
			\begin{itemize}
				\item \textbf{State}:  
				\(x_i\triangleq\bigl(S_{i,1},S_{i,2},Y_i\bigr)\in\mathbb{R}_+^{3}\).
				\item \textbf{Action}:  
				\(u_i\triangleq(a_i,Z_i)\) with scheduling action \(a_i\in\{1,2\}\) and  waiting time \(Z_i\in\mathbb{R}_+\).
				\item \textbf{One–step cost}:  
				\begin{equation}
					\begin{aligned}
						c_{\lambda}^{\rho}(x_i,u_i)&\triangleq
						\mathcal{C}_{\rho}\bigl(S_{i,1},S_{i,2},Y_i,Z_i\bigr)
						-\lambda\bigl(\mathbb{E}[Y_{i+1}]+Z_i\bigr)\\&=\sigma_Y+\mu_YZ_i-\lambda\mu_Y-\lambda Z_i\\&+\left(Z_{i}+\mu_Y\right)\left(2Y_{i}+Z_i+q_{\rho}\left(S_{i,1},S_{i,2}\right)\right).
					\end{aligned}
				\end{equation}
				which coincides with the right–hand side of~\eqref{eq:Dinkelbach-Theta}.
				\item \textbf{State transition}:  
				Conditioned on \((x_i,u_i)\), for any source index $m$, the sampling time evolves as:
				\begin{equation}
					S_{i+1,m}=
					\begin{cases}
						\displaystyle
						\max\{S_{i,1},S_{i,2}\}+Y_i+Z_i, &\text{if } a_i=m,\\[4pt]
						S_{i,m}, &\text{otherwise},
					\end{cases}
				\end{equation}
				while \(Y_{i+1}\) is drawn i.i.d. from the prescribed delay distribution and is independent of \(Y_i\).
			\end{itemize}
		\end{MDP}

			However, in the MDP above, the state $x_i$ may be \emph{transient and unbounded}, with the sampling times \(S_{i,1},S_{i,2}\) growing without bound. As a result, standard dynamic programming tools for average-cost problems, such as the ACOE \cite[Eq. 4.1]{howard1960dynamic} and its solution via RVI\cite{white1963dynamic} become inapplicable. To address this technical challenge, we propose a two-stage MDP transformation. We first introduce the following auxiliary variables:
			\begin{equation}\label{eq:Gamma-M}
					\Gamma_i \;=\;
					|S_{i,1}-S_{i,2}|,\quad
					M_i      \;=\;
					\max\{S_{i,1},S_{i,2}\},\quad\forall i
			\end{equation}
			where $\Gamma_i$ denotes the age gap between the two sources and $M_i$ captures their common upper envelope. We redefine the state of the system as the pair $(\Gamma_i, M_i)$. Based on this state space transformation, we construct the following equivalent MDP:
			\begin{MDP}[\textit{State Space Transformation}]\label{mdp2} This MDP is defined by the following components:			
								\begin{itemize}
					\item \textbf{State}:  
					\(x_i\triangleq\bigl(\Gamma_i, M_i,Y_i\bigr)\in\mathbb{R}_+^{3}\).
					\item \textbf{Action}:  
					\(u_i\triangleq(a_i,Z_i)\in\{1,2\}\times\mathbb{R}_+\).
					\item \textbf{One–step cost}:  
					\begin{equation}\label{eq:one-step-cost}
						\begin{aligned}
							&c_{\lambda}^{\rho}(x_i,u_i)=\sigma_Y+\mu_YZ_i-\lambda\mu_Y-\lambda Z_i\\&+\left(Z_{i}+\mu_Y\right)\left(2Y_{i}+Z_i+H_{\rho}\left(\Gamma_i,M_i\right)\right),
						\end{aligned}
					\end{equation}
					where $H_{\rho}(\Gamma_i,M_i)
					\;=\;q_{\rho}(S_{i,1},S_{i,2})$, with
					\begin{equation}
						H_{\rho}(\Gamma_i,M_i)
						\;=\;
						\Gamma_i\!
						\left(
						1-\frac{\rho^{2}\Gamma_i}{\,\Gamma_i+(1-\rho^{2})M_i\,}
						\right).
					\end{equation}
					
					\item \textbf{State transition}:  
					$M_i$ evolves as:
					\begin{equation}\label{Mdynamics}
						M_{i+1} = M_i + Y_i + Z_i.
					\end{equation}
					$\Gamma_i$ evolves as:
					\begin{equation}\label{eq:gap-env-transition}
						\begin{aligned}
							\Gamma_{i+1} =
							\begin{cases}
								\Gamma_i + Y_i + Z_i,
								& \text{if } a_i=\mathop{\arg\max}_{m}\{S_{i,m}\},\\[4pt]
								Y_i + Z_i,
								& \text{if } a_i=\mathop{\arg\min}_{m}\{S_{i,m}\},
							\end{cases}
						\end{aligned}
					\end{equation}
				\end{itemize}
			\end{MDP}
	With this MDP, although \(M_i\) may still grow unbounded over time, \(\Gamma_i\) can be bounded under a suitable policy. This motivates us to examine a structural property of the one-step cost function, captured in the following lemma.
			\begin{Lemma}[\textit{Monotonicity in the age gap \(\Gamma\)}]\label{lemma2}
				The function \(H_{\rho}(M,\Gamma)\) as well as the one–step cost
				\(c_\lambda^{\rho}(\Gamma,M,Y,u)\) in~\eqref{eq:one-step-cost} are strictly increasing
				in \(\Gamma\). 
			\end{Lemma}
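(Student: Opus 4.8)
The plan is to reduce the entire claim to a single sign computation for $H_\rho$, after which monotonicity of the one-step cost follows immediately. First I would combine the two terms in the definition of $H_\rho$ over a common denominator. Writing $a \triangleq 1-\rho^2$, a short simplification ($1-\tfrac{(1-a)\Gamma}{\Gamma+aM}=\tfrac{a(\Gamma+M)}{\Gamma+aM}$) gives the compact form
\[
H_\rho(\Gamma,M)=\frac{a\,\Gamma(\Gamma+M)}{\Gamma+aM},
\]
valid for $\Gamma\ge 0$ and $M>0$ (the latter holds since $M=\max\{S_{i,1},S_{i,2}\}>0$). This form already isolates the $\Gamma$-dependence and is the natural starting point for the monotonicity argument.

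Next I would establish strict monotonicity of $H_\rho$ in $\Gamma$ with $M$ held fixed. The cleanest route is to differentiate the compact form by the quotient rule; after expanding and cancelling the result collapses to
\[
\frac{\partial H_\rho}{\partial\Gamma}=a\,\frac{\Gamma^{2}+2a\Gamma M+aM^{2}}{(\Gamma+aM)^{2}}.
\]
For $\rho\in(0,1)$ we have $a\in(0,1)$, so every coefficient in the numerator is strictly positive; hence the numerator is strictly positive for all $(\Gamma,M)\neq(0,0)$, the denominator is positive, and therefore $\partial H_\rho/\partial\Gamma>0$, proving that $H_\rho$ is strictly increasing in $\Gamma$. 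If a calculus-free argument is preferred, I would instead factor $H_\rho=a(\Gamma+M)\cdot\frac{\Gamma}{\Gamma+aM}$ and observe it is a product of two strictly positive, strictly increasing functions of $\Gamma$, hence strictly increasing.

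Finally, I would transfer this to the one-step cost. Inspecting \eqref{eq:one-step-cost}, the state variable $\Gamma_i$ enters $c_\lambda^{\rho}$ only through the single term $(Z_i+\mu_Y)\,H_\rho(\Gamma_i,M_i)$, with all other summands independent of $\Gamma_i$. Since $Z_i\ge 0$ and $\mu_Y=\mathbb{E}[Y]>0$, the prefactor $(Z_i+\mu_Y)$ is strictly positive, so
\[
\frac{\partial c_\lambda^{\rho}}{\partial\Gamma}=(Z_i+\mu_Y)\,\frac{\partial H_\rho}{\partial\Gamma}>0,
\]
which yields strict monotonicity of the cost in $\Gamma$ and completes the argument.

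I expect no genuine obstacle here: the only care needed is the algebraic reduction to the compact form and the verification that $\Gamma^{2}+2a\Gamma M+aM^{2}$ has strictly positive coefficients. This fails only in the degenerate endpoints $\rho=0$ (where $H_\rho=\Gamma$ is nonetheless strictly increasing) and $\rho=1$ (where $H_\rho\equiv 0$ and the claim no longer holds), consistent with the paper's focus on $\rho\in(0,1)$.
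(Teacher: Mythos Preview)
Your proposal is correct and follows essentially the same approach as the paper: differentiate $H_\rho$ with respect to $\Gamma$, verify the derivative is strictly positive, and then observe that $\Gamma$ enters the one-step cost only through the term $(Z_i+\mu_Y)H_\rho(\Gamma_i,M_i)$ with a strictly positive prefactor. The only cosmetic difference is in how positivity of $\partial H_\rho/\partial\Gamma$ is established: the paper writes the derivative as $1-\tfrac{\rho^2\Gamma(\Gamma+2(1-\rho^2)M)}{(\Gamma+(1-\rho^2)M)^2}$ and bounds it below by $1-\rho^2$, whereas you first pass to the compact form $H_\rho=\tfrac{a\Gamma(\Gamma+M)}{\Gamma+aM}$ and read off positivity of the numerator $\Gamma^2+2a\Gamma M+aM^2$ directly (the two expressions are algebraically identical after clearing denominators).
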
 
			\begin{proof}
				For every fixed envelope level \(M\) we have
				\begin{equation}
					\begin{aligned}
						\frac{\partial H_{\rho}(M,\Gamma)}{\partial \Gamma}=1-\frac{\rho^2\Gamma(\Gamma+2(1-\rho^2)M)}{(\Gamma+(1-\rho^2)M)^2}>1-\rho^2>0,
					\end{aligned}
				\end{equation}
				Hence, \(H_{\rho}(M,\Gamma)\) is strictly increasing in \(\Gamma\). It follows immediately that \(c_\lambda^{\rho}(\Gamma,M,Y,u)\) is also  increasing with $\Gamma$.
			\end{proof}		
			Leveraging the monotonicity property in Lemma~\ref{lemma2}, we now show that the MAF scheduling policy is optimal for the transformed MDP \ref{mdp2}.
			\begin{theorem}[\textit{Separation principle and optimality of MAF scheduling}]\label{thm:MAF}
				For every \(\lambda\ge0\) and any given sampling policy~$f$, the stationary policy that
				\emph{always} schedules the sensor with the maximum age,
				\begin{equation}\label{eq33}
				a_i^{\textsf{MAF}}
				=\mathop{\arg\max}_{m\in\{1,2\}}\Delta_{m}(D_{i}),\quad\forall i\in\mathbb{N},
				\end{equation}
				is optimal for the average–cost MDP defined by
				\eqref{eq:one-step-cost}--\eqref{eq:gap-env-transition}.
			\end{theorem}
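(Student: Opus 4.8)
The plan is to exploit a decoupling that is visible directly in MDP~\ref{mdp2}: the scheduling action $a_i$ appears neither in the one--step cost \eqref{eq:one-step-cost} nor in the envelope recursion \eqref{Mdynamics}, and enters the dynamics \emph{only} through the age--gap transition \eqref{eq:gap-env-transition}. I would first record the sharp form of this dependence. Scheduling the stalest source, $a_i=\mathop{\arg\max}_{m}\Delta_m(D_i)=\mathop{\arg\min}_{m}S_{i,m}$, yields $\Gamma_{i+1}=Y_i+Z_i$, whereas scheduling the fresher source yields $\Gamma_{i+1}=\Gamma_i+Y_i+Z_i\ge Y_i+Z_i$. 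Thus, for any fixed current state and waiting time, the MAF action \eqref{eq33} simultaneously (i) leaves the incurred cost and the envelope $M_{i+1}=M_i+Y_i+Z_i$ untouched, (ii) attains the pointwise--smallest next age gap, and, crucially, (iii) makes $\Gamma_{i+1}=Y_i+Z_i$ \emph{independent} of the current gap $\Gamma_i$, so that the present gap influences only the present cost and never propagates into the future.

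Second, I would prove that a smaller age gap is never disadvantageous, i.e.\ that the value function is monotone nondecreasing in $\Gamma$. Writing the finite--horizon Bellman recursion $V_{n+1}(\Gamma,M,Y)=\min_{a,Z}\{c_\lambda^\rho(\Gamma,M,Y,Z)+\mathbb{E}_{Y'}[V_n(\Gamma'(a,Z),M+Y+Z,Y')]\}$ with $V_0\equiv 0$, I would argue by induction on $n$. In the inductive step, for each fixed $Z$ the monotonicity of $V_n$ in its gap argument forces the inner minimization over $a$ to select the MAF branch, collapsing the continuation value to $\mathbb{E}_{Y'}[V_n(Y+Z,M+Y+Z,Y')]$, which no longer depends on $\Gamma$. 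Consequently $\Gamma$ enters $V_{n+1}$ only through $c_\lambda^\rho$, which is strictly increasing in $\Gamma$ by Lemma~\ref{lemma2}; since taking $\min_Z$ of a family of functions each nondecreasing in $\Gamma$ preserves monotonicity, $V_{n+1}$ is again nondecreasing in $\Gamma$, closing the induction.

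Third, I would combine the two facts in the optimality equation: because the continuation value is nondecreasing in the gap and MAF attains the minimal next gap for \emph{every} choice of waiting time, MAF is a minimizer of the scheduling decision at every state, for every $\lambda\ge 0$ and in combination with any sampling action. This is exactly the asserted separation: the scheduler may be fixed to MAF without loss, after which the sampler $f$ is optimized separately. Passing from the finite horizon to the infinite--horizon average cost then transfers optimality to the stationary MAF policy.

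The hard part is precisely this last passage. As the paper stresses, the envelope $M_i$ is transient and grows without bound, so the ACOE/RVI apparatus that would normally certify a stationary optimal policy and validate the greedy step is not directly applicable, and one cannot simply invoke a steady--state value function. I would circumvent this by noting that $M$ enters the cost \emph{only} through the correction term $H_\rho(\Gamma,M)$, which is bounded between $(1-\rho^2)\Gamma$ and $\Gamma$ for all $M$, so the monotonicity in $\Gamma$ and the greedy selection of MAF hold \emph{uniformly} across the finite horizons; average--cost optimality then follows by a Cesàro/horizon limit, or equivalently by transporting the argument to the bounded, recurrent MDP the paper constructs (where $\Gamma$ is bounded under MAF and standard average--cost theory does apply). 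A secondary delicate point---that the waiting time supplied by a given $f$ may itself depend on $\Gamma_i$---is handled by the fact that the per--step greedy comparison is made for each admissible $Z$ separately rather than for a prefixed waiting--time sequence, so that MAF remains the optimal scheduling response irrespective of how the sampler reacts to the current gap.
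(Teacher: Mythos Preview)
Your proposal is correct but takes a genuinely different route from the paper. The paper proves Theorem~\ref{thm:MAF} by a \emph{sample-path interchange argument}: it fixes an arbitrary epoch~$\ell$ at which a non-MAF action is taken, replaces that single action by MAF while keeping all waiting times $\{Z_k\}$ and delays $\{Y_k\}$ unchanged, and then shows by induction on~$k$ (Appendix~\ref{appendixa}) that the resulting gap trajectory satisfies $\Gamma_k'\le\Gamma_k$ for all~$k$, with $M_k'=M_k$ for free from~\eqref{Mdynamics}. Lemma~\ref{lemma2} then gives $c_\lambda^\rho(\Gamma_k',M_k',Y_k,u_k')\le c_\lambda^\rho(\Gamma_k,M_k,Y_k,u_k)$ pathwise for every~$k$, so the swap never increases the average cost. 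You instead argue through the \emph{Bellman recursion}: induct on the horizon to show $V_n$ is monotone in~$\Gamma$, then read off MAF as the greedy scheduling minimizer at every state. Both approaches rest on the same structural facts---the scheduler enters only through the gap transition, MAF produces the pointwise smallest next gap, and the cost is increasing in~$\Gamma$---but the paper's coupling is more direct: because the domination holds step by step on every sample path and every finite horizon, the passage to the infinite-horizon average is immediate and entirely sidesteps the transience of~$M_i$ that you (correctly) flag as the delicate point of your approach. Your value-iteration route is more systematic and would transfer more readily to settings where a clean one-step interchange is not available, at the price of needing the Ces\`aro/limit argument you sketch.
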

			\begin{proof}
				Fix an arbitrary epoch index \(\ell\).  Construct two parallel action sequences:		
				$
				\mathbf u=((a_1,Z_1),\dots,(a_\ell,Z_\ell),\dots),
				\mathbf u'=((a_1',Z_1'),\dots,(a_\ell',Z_\ell'),\dots)$. Here $Z_k'=Z_k$, for $\forall k$. The scheduling action $a_\ell$ is:
				\begin{equation}
					a_\ell=\mathop{\arg\min}_{m\in\{1,2\}}\Delta_{m}(D_{\ell}),
				\end{equation}
				while the action $a_k'$ only differs $a_k$ at the epoch $\ell$\footnote{As $\Delta_m(D_{\ell})=D_\ell-S_{\ell,m}$, minimizing(maximizing) $\Delta_m(D_\ell)$ is equivalent to maximizing(minimizing) $S_{\ell,m}$.}:
					\begin{equation} 
						a_k'=\begin{cases}
							a_k&\text{if }k\ne \ell,\\
							\mathop{\arg\max}_{m}\Delta_{m}(D_\ell)&\text{if }k=\ell,
						\end{cases}
				\end{equation}
			Let the exogenous random delay \(\{Y_k\}_{k\ge 1}\) be the
				same for both systems.  Denote the resulting state trajectories by
				\(\{\Gamma_k,M_k\}_{k\ge1}\) and \(\{\Gamma_k',M_k'\}_{k\ge1}\), respectively. Because \(M_k\) evolves regardless of which sensor is scheduled,
				and the two sequences share the same \(\{Y_k,Z_k\}\), from \eqref{Mdynamics} we have
				\begin{equation}\label{Mcompare}
					M_k = M_k',\qquad\forall k\in\mathbb{N}.
				\end{equation}
				Then, we apply \textit{induction} in Appendix \ref{appendixa} to establish:
				\begin{equation}\label{Gammacompare}
					\Gamma_{k}'\le\Gamma_k, \qquad\forall k\in\mathbb{N}.
				\end{equation}
				Given the strict monotonicity of the cost function $c_\lambda^{\rho}(\Gamma,M,Y,u)$ in $\Gamma$ (\ref{lemma2}), it immediately follows that:
				\begin{equation}\label{compareonestep}
					c_\lambda^{\rho}(\Gamma_k',M_k',Y_k,(a_k',Z_k'))\le c_\lambda^{\rho}(\Gamma_k,M_k,Y_k,(a_k,Z_k)),\quad\forall k.
				\end{equation}
				Consequently, deviating from the MAF policy at any epoch results in a higher total cost, confirming that consistently scheduling the sensor with the maximum age is optimal.  
			\end{proof}	
			Theorem~\ref{thm:MAF} reveals a key simplification: \textit{scheduling and sampling can be decoupled}, with the optimal scheduling given by the intuitive MAF rule. Consequently, joint policy optimization reduces to designing the sampling policy under fixed $g_{\mathrm{MMSE}}$ and $\pi_{\mathrm{MAF}}$,  as addressed in the next subsection.

		\subsection{Optimal Sampling Policy}
		Given the MAF scheduling, the \textit{age gap} dynamics in \eqref{eq:gap-env-transition} becomes \textit{recurrent}: $\Gamma_{i+1}=Y_i+Z_i$. However, the state $M_i$ remains \textit{transient} and \textit{unbounded}. To address this, we here eliminate the \textit{tricky} state $M_i$ and formulate a new MDP:
		\begin{MDP}[\textit{$M_i$ State Elimination Given the MAF Scheduling}]\label{mdp3} This MDP is expressed by four components:			
			\begin{itemize}
				\item \textbf{State}:  
				\(x_i\triangleq\bigl(\Gamma_i,Y_i\bigr)\in\mathbb{R}_+^{2}\).
				\item \textbf{Action}:  
				\(Z_i\in\mathbb{R}_+\).
				\item \textbf{One–step cost}:  
				\begin{equation}\label{eq:one-step-cost-compare}
					\begin{aligned}
						\psi_\lambda(\Gamma_i,&Y_i,u_i)=\sigma_Y+\mu_YZ_i-\lambda\mu_Y-\lambda Z_i\\&+\left(Z_{i}+\mu_Y\right)\left(2Y_{i}+Z_i+\Gamma_i\right).
					\end{aligned}
				\end{equation}
				
				\item \textbf{State transition}:  
				$\Gamma_i$ evolves as:
				\begin{equation}\label{dynamics2Gamma}
					\Gamma_{i+1} =Y_i + Z_i,
				\end{equation}
				while \(Y_{i+1}\) is independent of \(Y_i\).
			\end{itemize}
		\end{MDP}
		
		The following theorem establishes that, under the \textit{average cost criterion}, it suffices to solve MDP \ref{mdp3} rather than MDP \ref{mdp2} when the MAF scheduling policy is employed.
		\begin{theorem}[\textit{Average-cost MDP Equivalence}]\label{lemma3}
			Given the MAF scheduling policy, for any sampling policy \((Z_0,Z_1,\ldots)\), the long-term average costs corresponding to \(c_\lambda^{\rho}(\cdot)\) in MDP \ref{mdp2} and \(\psi_\lambda(\cdot)\) in MDP \ref{mdp3} coincide exactly. Formally,  
			\begin{equation}\label{equivalence}
				\lim_{n\to\infty}
				\frac{1}{n}\sum_{i=0}^{n-1}
				\mathbb{E}[\psi_\lambda(\Gamma_i,Y_i,u_i)-c_\lambda^{\rho}(\Gamma_i,M_i,Y_i,u_i)] = 0.
			\end{equation}
		\end{theorem}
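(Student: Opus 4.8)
The plan is to reduce the claim to a deterministic, pathwise decay estimate on the per-stage cost gap and then close the argument by a Cesàro (time-averaging) step. First I would compute the one-step difference between the two cost functions. Since $\psi_\lambda$ and $c_\lambda^{\rho}$ in \eqref{eq:one-step-cost-compare} and \eqref{eq:one-step-cost} are identical except that $\Gamma_i$ replaces the interaction term $H_\rho(\Gamma_i,M_i)$, subtraction leaves only that term:
\begin{equation}
\psi_\lambda(\Gamma_i,Y_i,u_i)-c_\lambda^{\rho}(\Gamma_i,M_i,Y_i,u_i)=(Z_i+\mu_Y)\bigl(\Gamma_i-H_\rho(\Gamma_i,M_i)\bigr)=(Z_i+\mu_Y)\frac{\rho^2\Gamma_i^2}{\Gamma_i+(1-\rho^2)M_i}\ge 0.
\end{equation}
In particular the gap is nonnegative, so it suffices to upper bound its Cesàro average and squeeze it to zero.

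Next I would bound numerator and denominator separately along every sample path. Under the MAF scheduling policy the age-gap recursion collapses to $\Gamma_{i+1}=Y_i+Z_i$ (cf.\ \eqref{dynamics2Gamma}); since the delay support $\mathcal{Y}$ and the waiting set $\mathcal{Z}$ are finite and bounded, $\Gamma_i$ stays in $[0,\Gamma_{\max}]$ with $\Gamma_{\max}=\max\mathcal{Y}+\max\mathcal{Z}$, and $Z_i+\mu_Y\le \max\mathcal{Z}+\mu_Y$, so the numerator $(Z_i+\mu_Y)\rho^2\Gamma_i^2$ is bounded by a constant $C$ uniformly in $i$. For the denominator I would exploit the envelope recursion $M_{i+1}=M_i+Y_i+Z_i$ from \eqref{Mdynamics}, which unrolls to $M_i=M_0+\sum_{k=0}^{i-1}(Y_k+Z_k)\ge i\,Y_{\min}$, where $Y_{\min}=\min\mathcal{Y}>0$ because $\mathcal{Y}\subset\mathbb{R}^{+}$ is finite. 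Taking $\rho<1$ so that $1-\rho^2>0$, this gives $\Gamma_i+(1-\rho^2)M_i\ge(1-\rho^2)i\,Y_{\min}$, and hence the pathwise estimate
\begin{equation}
0\le \psi_\lambda(\Gamma_i,Y_i,u_i)-c_\lambda^{\rho}(\Gamma_i,M_i,Y_i,u_i)\le \frac{C}{(1-\rho^2)Y_{\min}}\cdot\frac{1}{i}=:\frac{C'}{i},\qquad \forall i\ge 1.
\end{equation}
As this holds almost surely and the integrand is bounded, the same bound passes to expectations.

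Finally I would average. Since the $i=0$ contribution is a finite constant $\kappa_0$ and $H_{n-1}=\sum_{i=1}^{n-1}1/i=O(\log n)$,
\begin{equation}
0\le \frac{1}{n}\sum_{i=0}^{n-1}\mathbb{E}\bigl[\psi_\lambda-c_\lambda^{\rho}\bigr]\le \frac{\kappa_0+C'H_{n-1}}{n}=O\!\Bigl(\frac{\log n}{n}\Bigr),
\end{equation}
which tends to $0$ as $n\to\infty$; combined with the nonnegativity established above, the squeeze theorem delivers \eqref{equivalence}.

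The main obstacle—and the step that makes the argument work—is establishing the \emph{almost-sure linear growth} $M_i\ge i\,Y_{\min}$, which hinges on $Y_{\min}=\min\mathcal{Y}>0$: if delays could approach zero, the envelope need not grow and the gap need not decay at rate $1/i$. If one only knows $\mu_Y>0$ rather than $Y_{\min}>0$, I would instead invoke the strong law of large numbers to get $M_i\ge c\,i$ eventually (a.s.) for some $c\in(0,\mu_Y)$ and conclude by bounded convergence. A secondary subtlety is the degenerate case $\rho=1$, where $1-\rho^2=0$ and the bound fails; this is precisely the single-source collapse $W_t^{(1)}=W_t^{(2)}$ noted after Theorem~\ref{the1}, so I would restrict to $\rho\in[0,1)$ and treat $\rho=1$ separately. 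Note finally that boundedness of $\Gamma_i$ is what controls the numerator, and it is exactly the MAF-induced recurrence $\Gamma_{i+1}=Y_i+Z_i$ that guarantees it—this is where the MAF hypothesis of the theorem enters.
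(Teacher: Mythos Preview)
Your proposal is correct and follows essentially the same line as the paper's proof: compute the one-step gap $(Z_i+\mu_Y)\,\rho^2\Gamma_i^2/(\Gamma_i+(1-\rho^2)M_i)$, bound the numerator using the MAF-induced recurrence $\Gamma_{i+1}=Y_i+Z_i$ (so $\Gamma_i$ is uniformly bounded), show $M_i$ grows linearly in $i$, and finish by observing $\tfrac{1}{n}\sum_{i}1/i\to 0$. The only difference is in how linear growth of $M_i$ is obtained: the paper invokes the strong law of large numbers to get $M_i>\tfrac{\mu_Y}{2}i$ for all $i\ge k_0$ (which is exactly your ``backup'' argument), whereas your primary route uses the deterministic bound $M_i\ge iY_{\min}$; the SLLN version is the one that actually covers the paper's own numerical setting where delays can be zero with positive probability, so your contingency plan is in fact the needed one.
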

		\begin{proof}
			See Appendix \ref{proof:the3}.
		\end{proof}
		
		With this transformation, the previously intractable MDP with \textit{transient} and \textit{unbounded} states becomes analytically tractable. Since \(\Gamma_i\) is \textit{recurrent}, standard solution methods, such as the RVI algorithm, can be applied directly to solve this problem.
		
		\subsection{Age-Optimal Sampling is All you Need}
		\begin{figure}[h]
			\centering
			\includegraphics[width=0.7\linewidth]{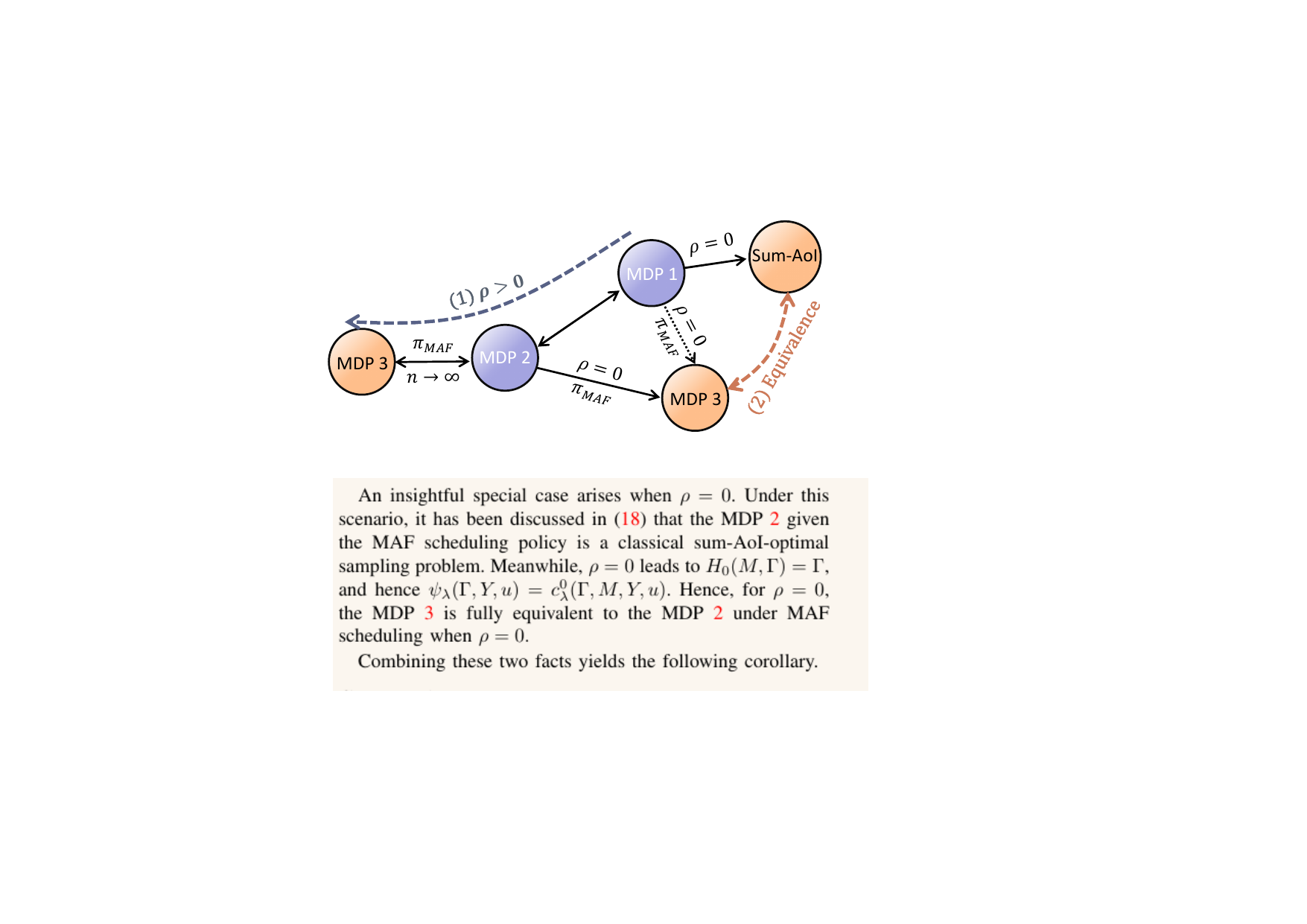}
			\caption{Relationship between different MDP formulations.}
			\label{fig:simulation2}
		\end{figure}
		An insightful special case arises when $\rho=0$. In this scenario, the function $H_{0}(M,\Gamma)$ reduces to $\Gamma$, which in turn simplifies the one-step cost to $\psi_{\lambda}(\Gamma,Y,u)=c_{\lambda}^{0}(\Gamma,M,Y,u)$. As a result, MDP \ref{mdp2} becomes fully equivalent to MDP \ref{mdp3} under the MAF scheduling policy when $\rho=0$. On the other hand, it has been established in \eqref{c} that the MDP \ref{mdp1} is a classical sum-AoI optimization problem when $\rho=0$. Combining these two observations leads to conclusion (2) in Fig.~\ref{fig:simulation2}.
		
	    In contrast, for $\rho>0$, the solution to MDP \ref{mdp1} remains identical to that of MDP 3 under the MAF scheduling policy, as indicated by conclusion (1) in Fig.~\ref{fig:simulation2}. These structural equivalences lead to the following corollary.
		
		\begin{corollary}[\textit{Age-Optimal Sampling is MSE-Optimal in the Infinite Horizon}]\label{coro1}
			
			Given the MAF scheduling, the MSE-optimal sampling is asymptotically equivalent to AoI-optimal sampling in the infinite horizon.
		\end{corollary}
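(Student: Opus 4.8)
The plan is to convert the MSE-optimal sampling problem, under the already-fixed MAF scheduler, into the analysis of MDP~\ref{mdp3}, and then to exploit the fact that the one-step cost $\psi_\lambda$ in~\eqref{eq:one-step-cost-compare} contains \emph{no} dependence on the correlation coefficient $\rho$. First I would fix the MAF policy (optimal by Theorem~\ref{thm:MAF}) and an arbitrary Dinkelbach parameter $\lambda\ge 0$. By Lemma~\ref{lem:Dinkelbach}, designing the MSE-optimal sampler is equivalent to minimizing the long-term average of the Dinkelbach cost $c_\lambda^{\rho}$ at the critical value $\lambda^\star$ with $\Theta(\lambda^\star)=0$; this is precisely the average-cost MDP~\ref{mdp2}. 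I would then invoke Theorem~\ref{lemma3}: for \emph{every} admissible sampling sequence $(Z_0,Z_1,\dots)$, the long-term average of $c_\lambda^{\rho}$ under MDP~\ref{mdp2} equals the long-term average of $\psi_\lambda$ under MDP~\ref{mdp3}. Because this identity is uniform over all samplers, the two objectives share the same set of minimizers, so the MSE-optimal sampler coincides with any sampler that minimizes the long-run average of $\psi_\lambda$.

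The crucial structural step is the observation that both the cost $\psi_\lambda(\Gamma_i,Y_i,u_i)$ in~\eqref{eq:one-step-cost-compare} and the gap dynamics $\Gamma_{i+1}=Y_i+Z_i$ in MDP~\ref{mdp3} are entirely free of $\rho$: the correlation-dependent term $H_\rho(\Gamma,M)$ has been replaced by the bare linear term $\Gamma_i$ during the $M_i$-elimination. Consequently the optimal sampler for MDP~\ref{mdp3} is \emph{invariant} across all $\rho\in[0,1]$. I would then specialize to the boundary case $\rho=0$. There the instantaneous MSE collapses to the pure sum of ages $\epsilon(t)=\Delta_1(t)+\Delta_2(t)$ in~\eqref{c}, and since $H_0(M,\Gamma)=\Gamma$ makes MDP~\ref{mdp2} coincide exactly with MDP~\ref{mdp3}, the $\rho=0$ problem is precisely the multi-source sum-AoI minimization of~\cite{9358178}, whose solution is by definition the AoI-optimal sampler. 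Since the $\psi_\lambda$-minimizer does not move with $\rho$, the MSE-optimal sampler for any $\rho>0$ is the very same AoI-optimal sampler, which is the content of conclusions (1)--(2) summarized in Fig.~\ref{fig:simulation2}.

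The main obstacle I expect is honestly accounting for the word \emph{asymptotic} in the statement and for the chaining through Dinkelbach. Theorem~\ref{lemma3} equates only the Ces\`aro (infinite-horizon average) limits of $c_\lambda^{\rho}$ and $\psi_\lambda$; their per-stage values genuinely differ, so the MSE-optimal and AoI-optimal samplers need not agree at any finite horizon—they coincide only in the long-run average, which is exactly the regime in which the corollary is asserted. I would therefore be careful to phrase the conclusion as an equivalence of long-term average performance under the critical $\lambda^\star$, verifying via Lemma~\ref{lem:Dinkelbach}(2) that optimality of the sampler for the auxiliary problem at $\lambda^\star$ transfers back to the original fractional MSE objective, so that the identified $\rho$-independent (AoI-optimal) sampler indeed achieves $\varphi^\star$ in the infinite horizon.
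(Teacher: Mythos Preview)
Your proposal is correct and follows essentially the same route as the paper: reduce MDP~\ref{mdp2} to the $\rho$-free MDP~\ref{mdp3} via Theorem~\ref{lemma3} under MAF, observe that $\psi_\lambda$ and the $\Gamma$-dynamics are independent of $\rho$, and identify the $\rho=0$ instance with the sum-AoI problem via~\eqref{c} and $H_0(M,\Gamma)=\Gamma$. Your added care about the Dinkelbach chaining (that $\Theta(\lambda)$, and hence $\lambda^\star$, is itself $\rho$-free because the average of $c_\lambda^{\rho}$ equals that of $\psi_\lambda$ for \emph{every} sampler) and about the ``asymptotic'' qualifier are useful clarifications the paper leaves implicit, but they do not constitute a different approach.
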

		This result allows us to directly leverage age-optimal sampling policies in our estimation framework. In particular, we can adopt the low-complexity threshold-triggered water-filling sampling (WF) policy in~\cite[Eq. 44]{9358178}, given by:
		\begin{equation}\label{thr}
			Z_i=\max\left\{0,T-\frac{\Delta_1(D_{i})+\Delta_2(D_{i})}{2}\right\}, \forall i\in\mathbb{N},
		\end{equation}
		where the threshold $T$ can be obtained by golden-section methods \cite{press1992golden}. This policy is shown to achieve near-optimal sum-AoI performance for multiple-sources scenarios.
									
				\section{Simulation Results}\label{sectionVIII}
				We evaluate the performance of the proposed joint sampling and scheduling policy under correlated Wiener processes. Unless otherwise specified, the correlation level is fixed at \( \rho = 0.9 \), and MMSE fusion estimation is implemented at the receiver. As a case study, the transmission delay is modeled as a binary random variable: 0 with probability \( p \), and \( Y_{\max} \) with probability \( 1 - p \).
				
				\vspace{0.5em}
				\textbf{Benchmark Policies:}
				\begin{itemize}
					\item \textbf{Zero-Wait (ZW) Sampling:} Each source samples right after the previous transmission completes, i.e., \( Z_i = 0 \).
					\item \textbf{Constant-Wait Sampling:} A fixed delay \( Z_i = d \) is imposed between transmissions.
					\item \textbf{Random (RAND) Scheduling:} Each sensor is selected at random, i.e., \( \Pr(a_i = m) = \frac{1}{2} \) for all \( m \in \{1,2\} \).
				\end{itemize}
				Our proposed MAF scheduling \eqref{eq33} plus Water-filling sampling \eqref{thr} are jointly referred to by WF+MAF.
				
				\begin{figure}
					\centering
					\includegraphics[width=0.65\linewidth]{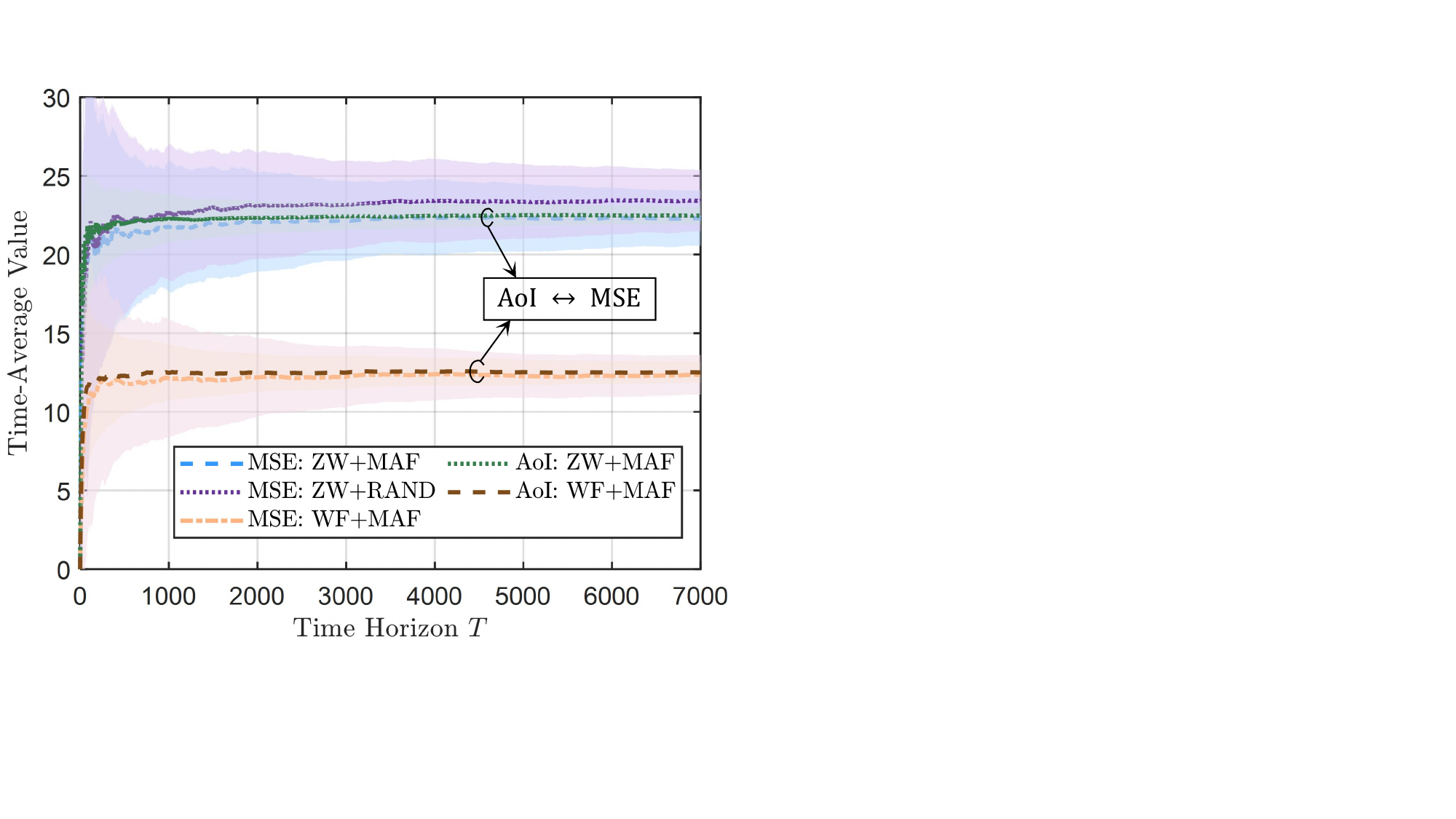}
					\caption{Finite-horizon time-average MSE and AoI under different scheduling and sampling policies. Here $Y_{\max}=20$ and $p=0.95$.}
					\label{fig:simulation1}
				\end{figure}
				
				\begin{figure}[htbp]
					\centering
					\begin{minipage}[t]{0.5\linewidth}
						\centering
						\includegraphics[width=\linewidth]{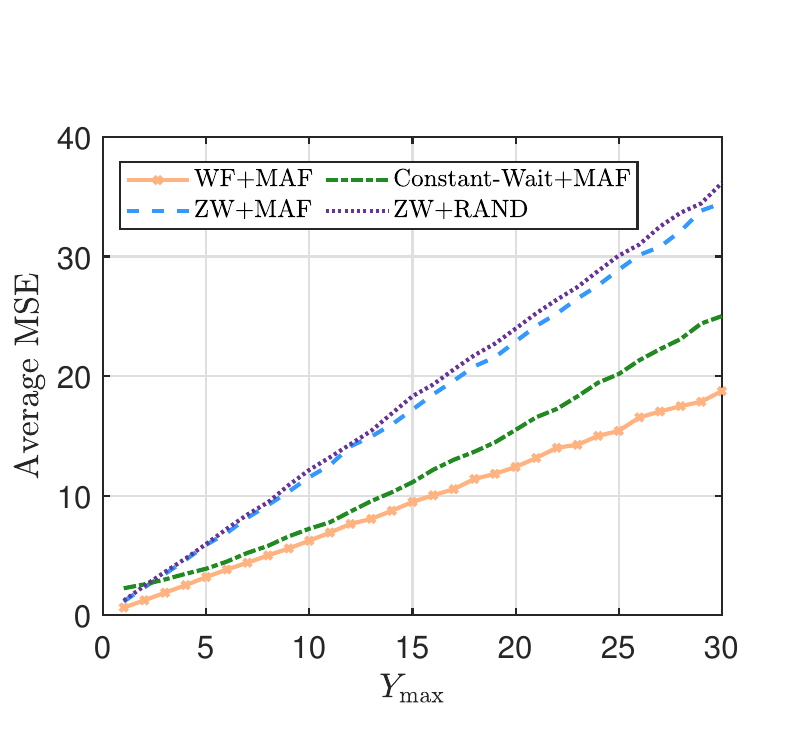}
						\textbf{(a)} MSE vs. $Y_{\max}$
						\label{fig:minipage1}
					\end{minipage}
					\hfill
					\begin{minipage}[t]{0.48\linewidth}
						\centering
						\includegraphics[width=\linewidth]{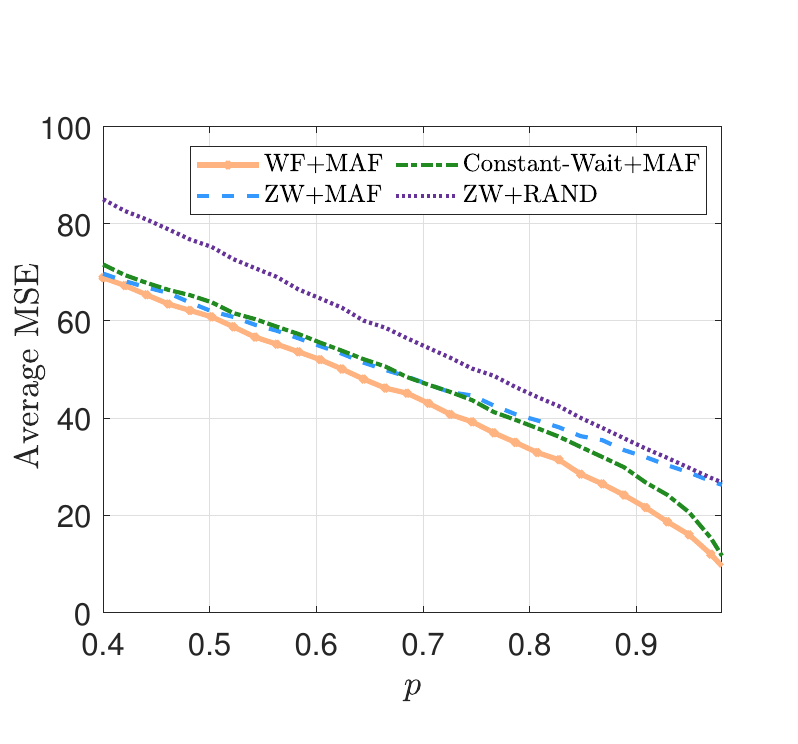}
						\textbf{(b)} MSE vs. $p$.
						\label{fig:minipage2}
					\end{minipage}
					\caption{Long-term average MSE and AoI under different scheduling and sampling policies. We set $d=1$ for constant-wait sampling. (a) $p=0.95$; (b) $Y_{\max}=25$.}
					\label{fig:twominipage}
				\end{figure}			
				
				Fig.~\ref{fig:simulation1} shows that time-average MSE and AoI converge as the horizon \( T \) increases, validating the \textit{asymptotic equivalence} in Corollary~\ref{coro1}. Notably, AoI serves as a provable upper bound for MSE, as implied by~\eqref{Delta}, and the gap between the two metrics vanishes as $T\to\infty$. 
				
				Fig.~\ref{fig:twominipage} compares long-term MSE across policy combinations. WF+MAF consistently achieves the lowest MSE. In Fig.~(a), its performance degrades slowly with increasing \( Y_{\max} \), unlike ZW-based strategies which exhibit sharp deterioration. In Fig.~(b), WF+MAF remains robust across the parameter $p$. These results demonstrate that the proposed joint sampling and scheduling policy offers strong performance and robustness under various channel delay distributions.

				\section{Conclusion}\label{sectionVI}
				In this paper, we studied a remote information fusion problem in which a remote fusion center aims to track a multi-dimensional \textit{correlated} Wiener process based on status updates from multiple distributed sources communicating over a shared channel with random delays. We explicitly solved the joint optimization of sampling, scheduling, and estimation policies, and derived a structural characterization of the optimal solution that is both theoretically optimal and practically implementable. Notably, we established an asymptotic equivalence between the time-averaged AoI and MSE under the optimal MAF scheduling and MMSE estimation, even in the presence of inter-source correlation. Future work includes extending our framework to larger-scale systems with more sensors, and exploring other classes of Markovian sources beyond the correlated Wiener process.
				
				\bibliographystyle{IEEEtran}
				\bibliography{reference}

				\appendices
				\normalsize
				\section{Proof of Theorem \ref{the1}}\label{proof:the1}
				Consider the following zero-mean Gaussian vector:
				\begin{equation}
					\mathbf{Z}\triangleq\bigl(X,\underbrace{Y_1,Y_2}_{Y}\bigr)^{\top}
					=\bigl(W_t^{(1)},\,W_{t_1}^{(1)},\,W_{t_2}^{(2)}\bigr)^{\top}.
				\end{equation}
				Its block covariance matrix, obtained from the coupled Wiener process construction, is:
				\begin{equation}
				{\Sigma}=\begin{pmatrix}
					\Sigma_{XX} & \Sigma_{XY}\\
					\Sigma_{YX} & \Sigma_{YY}
				\end{pmatrix}
				\!=\!
				\begin{pmatrix}
					t &\!\! [\,t_1\; \rho t_2]\\[4pt]
					[\,t_1\; \rho t_2]^{\!\top} &
					\begin{psmallmatrix}
						t_1 & \rho t_{\min}\\
						\rho t_{\min} & t_2
					\end{psmallmatrix}
				\end{pmatrix},
			\end{equation}
				where $t_{\min}=\min\{t_1,t_2\}$.  
				Because $X,Y_1,Y_2$ are jointly Gaussian random variables, one can calculate the following conditional expectation \cite[Chap. 2]{anderson1958introduction}:
				\begin{equation}\label{eq46}
					\mathbb{E}[X\mid Y_1,Y_2]=\Sigma_{XY}\,\Sigma_{YY}^{-1}
					\begin{pmatrix}Y_1\\Y_2\end{pmatrix}.
				\end{equation}

				A direct inversion of the matrix $\Sigma_{YY}$ yields:
				\begin{equation}
					\Sigma_{YY}^{-1}=\frac{1}{\Delta}
					\begin{pmatrix}
						t_2 & -\rho t_{\min}\\
						-\rho t_{\min} & t_1
					\end{pmatrix},
					\quad
					\Delta=t_1t_2-\rho^{2}t_{\min}^{2},
				\end{equation}
				and therefore
				\begin{equation}
				\Sigma_{XY}\Sigma_{YY}^{-1}
				=\frac1{\Delta}\,[\,t_2(t_1-\rho^{2}t_{\min}),\; \rho t_1(t_2-t_{\min})\,].
			\end{equation}
				Substituting $X=W_{t}^{(1)}$, $Y_1=W_{t_1}^{(1)}$ and $Y_2=W_{t_2}^{(2)}$ into \eqref{eq46} gives the conditional expectation.  
				Setting $t_1=t-\Delta_1(t)$ and $t_2=t-\Delta_2(t)$ immediately yields~\eqref{eq11}. Symmetrically, we can also establish \eqref{eq12}.
				\section{Proof of \eqref{Gammacompare}}\label{appendixa}
				 For $k\leq \ell$, since $a_k = a_k'$ for all $k \leq \ell$, by the update rule~\eqref{eq:gap-env-transition}, we have $\Gamma_k = \Gamma_k'$ for these steps.
				 
				 For $k = \ell+1$, at step $\ell+1$, the actions differ, so
				\begin{equation}
					\Gamma_{\ell+1}=\Gamma_\ell+Y_\ell+Z_\ell,\qquad \Gamma_{\ell+1}'=Y_\ell+Z_\ell.
				\end{equation}
				which gives $\Gamma_{\ell+1} > \Gamma_{\ell+1}'$ since $\Gamma_\ell > 0$.

				For $k \geq \ell+2$, We use induction. Suppose for some $n \geq \ell+1$ we have $\Gamma_n \geq \Gamma_n'$.
				Since $a_n = a_n'$ for $n > \ell$, we have:
				\begin{equation}
					\Gamma_{n+1}-\Gamma_{n+1}'=\begin{cases}
						\Gamma_n -\Gamma_n',
						& \text{if } a_n=\mathop{\arg\max}_{j}\{S_{n,j}\},\\[4pt]
						0
						& \text{if } a_n=\mathop{\arg\min}_{j}\{S_{n,j}\}.
					\end{cases}
				\end{equation}
				Both cases preserve the inequality $\Gamma_{n+1} \geq \Gamma_{n+1}'$. Thus, by induction, $\Gamma_k \geq \Gamma_k'$ for all $k$. \hfill$\blacksquare$	
				\section{Proof of Theorem \ref{lemma3}}\label{proof:the3}
				From the update rule \eqref{Mdynamics}, we have 
				\begin{equation}
					M_n=M_0+\sum_{i=1}^{n}(Y_{i-1}+Z_{i-1}).
				\end{equation}
				By the \textit{Kolmogorov strong large number law}, 
				\begin{equation}
					\lim_{n\to\infty}\frac{M_n}{n}>\lim_{n\to\infty}\frac{M_0}{n}+\frac{\sum_{i=1}^{n}Y_{i-1}}{n}\mathop{\rightarrow}_{\text{a.s.}}\mu_Y.
				\end{equation}
				Therefore, there exists $k_0$ such that for all $i \geq k_0$,
				\begin{equation}\label{labela}
					M_i>\frac{\mu_Y}{2}i, \qquad\forall i\ge k_0.
				\end{equation}
				Denote $y^{(u)}$ and $z^{(u)}$ as the maximum value of $Y_i$ and $Z_i$, respectively. From \eqref{dynamics2Gamma}, we obtain that:
				\begin{equation}\label{labelb}
					\Gamma_i^2=(Y_{i-1}+Z_{i-1})^2\le 2 (y^{(u)})^2+2(z^{(u)})^2.
				\end{equation}
				Substitute the above bounds into the left-hand side of \eqref{equivalence}:
					\begin{equation}
						\begin{aligned}
							&\lim_{n\to\infty}
							\frac{1}{n}\sum_{i=0}^{n-1}
							\mathbb{E}[\psi_\lambda(\Gamma_i,Y_i,u_i)-c_\lambda^{\rho}(\Gamma_i,M_i,Y_i,u_i)]\\
							&=\lim_{n\to\infty}
							\frac{1}{n}\sum_{i=0}^{n-1}(Z_i+\mu_Y)\left(\frac{\rho^{2}\Gamma_i^2}{\,\Gamma_i+(1-\rho^{2})M_i\,}\right)\\
							&<\rho^2(z^{(u)}+\mu_Y)\lim_{n\to\infty}
							\frac{1}{n}\sum_{i=1}^{k_0-1}\frac{\Gamma_i^2}{(1-\rho^2)M_i}+\sum_{i=k_0}^{n}\frac{\Gamma_i^2}{(1-\rho^2)M_i}\\
							&<\frac{2\rho^2(z^{(u)}+\mu_Y)}{\mu_Y(1-\rho^2)}\lim_{n\to\infty}
							\frac{1}{n}\sum_{i=k_0}^{n}\frac{\Gamma_i^2}{i}\\
							&\le\frac{4\rho^2(z^{(u)}+\mu_Y)((y^{(u)})^2+(z^{(u)})^2)}{\mu_Y(1-\rho^2)}\lim_{n\to\infty}
							\frac{1}{n}\sum_{i=k_0}^{n}\frac{1}{i}=0,
						\end{aligned}
					\end{equation}
					where the last equality follows since $\frac{1}{n} \sum_{i=k_0}^n \frac{1}{i} \to 0$ as $n \to \infty$.
					By the squeeze theorem, we complete the proof.\hfill$\blacksquare$	
			\end{document}